\newtheorem{theorem}{Theorem}
\newtheorem{lemma}{Lemma}
\newcommand{\Del}{\ensuremath{\nabla}}
\newcommand{\M}{\ensuremath{\mathbb M}}
\newcommand{\Lie}{\pounds}
\newcommand{\Scri}{\ensuremath{\mathscr I} }
\newcommand{\pback}[1]{\ensuremath{ \underleftarrow{#1}}}
\newcommand{\XX}{{\Gamma}}
\begin{document}

%**end of header

\title{
{\bfseries Covariant Derivatives on Null Submanifolds}
}

\author{
	Don Hickethier \\[-2.5pt]
	\normalsize
	\textit{Department of Mathematics, Flathead Valley Community College,
		%777 Grandview Dr,
		Kalispell, MT  59901} \\[-2.5pt]
	\normalsize
	{\tt dhicketh{\rm @}fvcc.edu}
		\and
	Tevian Dray \\[-2.5pt]
	\normalsize
	\textit{Department of Mathematics, Oregon State University,
		Corvallis, OR  97331} \\[-2.5pt]
	\normalsize
	{\tt tevian{\rm @}math.oregonstate.edu} \\
}

%\date{\normalsize \today}
\date{\normalsize July 26, 2011}

\maketitle

\begin{abstract}
The degenerate nature of the metric on null hypersurfaces makes it difficult
to define a covariant derivative on null submanifolds.  Recent approaches
using decomposition to define a covariant derivative on null hypersurfaces are
investigated, with examples demonstrating the limitations of the methods.
Motivated by Geroch's work on asymptotically flat spacetimes, conformal
transformations are used to construct a covariant derivative on null
hypersurfaces, and a condition on the Ricci tensor is given to determine when
this construction can be used.  Several examples are given, including the
construction of a covariant derivative operator for the class of spherically
symmetric hypersurfaces.
\end{abstract}

\section{Introduction}
\label{intro}

Given a null submanifold of a Lorentzian spacetime, is it possible to define a
preferred torsion-free, metric-compatible covariant derivative?  Is it
possible to determine, a priori, when such a connection can be found?

One place where the need for a such a derivative arises is in the study of
asymptotically flat spacetimes.  According to Geroch~\cite{rG76}:
\begin{quote}
\textit{In the null case, one has no unique derivative operator, and so one
works more with Lie and exterior derivatives, and with other differential
concomitants.  As a general rule, it is considerably more difficult in the
null case to write down formulae which say what one wants to say.  }
\end{quote}
Geroch's treatment of the null boundaries of asymptotically flat
spacetimes~\cite{rG76} motivates the techniques used here to find a preferred
derivative operator on (some) null submanifolds.

We briefly review traditional approaches to this problem in
Section~\ref{Traditional}, and summarize the relevant parts of Geroch's
construction of a null asymptotic boundary in Section~\ref{geroch}.
Section~\ref{first-theorem} establishes an existence condition, showing when
the Geroch construction can be used to construct a preferred covariant
derivative on a given null submanifold, and Section~\ref{theorem-two} then
gives a simpler condition on the Ricci tensor for determining when this
construction is possible.  We present several examples in
Section~\ref{Examples}, including the horizon of the Schwarzschild geometry,
and discuss our results in Section~\ref{Summary}.

\section{Traditional Approaches}
\label{Traditional}

\subsection{Gauss decomposition}

Let $(M,g)$ be a \textit{spacetime}, that is a manofild $M$ together with a
nondegenerate metric $g$ of Lorentzian signature.  If $(\Sigma,q)$ is a
submanifold of $(M,g)$ given by $\varphi:\Sigma\rightarrow M$, and if
$q=\varphi^* g$ is a nondegenerate metric on $\Sigma$, then a connection
$\Del$ on $M$ induces a natural connection $D$ on $\Sigma$.  A traditional
approach to defining this connection is to split $TM$ into the direct sum
 \begin{equation}
   TM=T\Sigma\oplus T\Sigma^{\perp},
 \end{equation}
where $T\Sigma^{\perp}$ is the orthogonal complement of $T\Sigma$ in
$TM$.

For $X,\,Y\in \XX (TM)$, $\Del_{X} Y$ can be separated on $\Sigma$
into tangential and orthogonal components of $TM$ which define the
induced connection, $D_X\,Y$, and the second fundamental form,
$II(X,\,Y)$.  Explicitly, we have
\begin{align}
  D_X\,Y & = (\Del_X\,Y)^{\parallel}\\
  II(X,\,Y) & = (\Del_X\,Y)^{\perp} = \Del_{X} Y-D_X Y.
\end{align}
If $\Del$ is the Levi-Civita connection on $M$, then $D$ turns out to be the
Levi-Civita connection on $\Sigma$.  The decomposition
 \begin{equation}
\label{gauss-formula}
   \Del_{X}\,Y = D_X Y + II(X,Y)
 \end{equation}
is called Gauss' formula~\cite{mS79}.

\subsection{Duggal decomposition}
\label{Duggal}

Difficulties arise when the metric $q$, on $\Sigma$, is degenerate.
Furthermore, if $\Sigma$ is lightlike, $TM$ cannot be decomposed into the
direct sum of $T\Sigma$ and $T\Sigma^{\perp}$, since there are vectors in
$T\Sigma$ that are also in $T\Sigma^{\perp}$, as well as vectors that are in
neither space.  Despite these difficulties, Duggal and Bejancu~\cite{DB96}
(henceforth referred to as \textit{Duggal}) introduced a decomposition that
produces equations similar to the Gauss formula~\eqref{gauss-formula}, as we
now describe.

Given a  lightlike submanifold $\Sigma$ of $M$ with tangent space
$T\Sigma$, the goal is to create a decomposition of $TM$ by
producing a vector bundle similar to $T\Sigma^{\perp}$.  Choose a
\emph{screen manifold} $Scr(T\Sigma)\subset T\Sigma$ such that
 \begin{equation}
   T\Sigma = Scr(T\Sigma) \oplus T\Sigma^{\perp}.
 \end{equation}

Given a screen manifold, $Scr(T\Sigma)$, Duggal proves the existence of
a unique complementary vector bundle, $tr(T\Sigma)$, to $T\Sigma$, called
the \emph{lightlike transversal vector bundle} of $\Sigma$ with respect
to $Scr(T\Sigma)$.
\begin{theorem}[Duggal]
\label{DuggalThm}
Let $(\Sigma,\,q,\,Scr(T\Sigma))$ be a lightlike hypersurface of a Lorentzian
manifold $(M,g)$.  Then there exists a unique vector bundle
$tr(T\Sigma)\subset TM$, of rank 1 over $\Sigma$, such that for any nonzero
$\xi\in \XX (T\Sigma^\perp)$ there exists a unique $N \in \XX (tr(T\Sigma))$
such that
 \begin{equation}
 \label{N_properties}
   N\cdot\xi =-1,\quad N\cdot N=0,\quad N\cdot W
   =0\,\forall\, W\in Scr(T\Sigma).
 \end{equation}
\end{theorem}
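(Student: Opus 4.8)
The plan is to reduce the statement to a fiberwise problem in linear algebra and then globalize. Fix a point $p\in\Sigma$ and write $S = Scr(T_p\Sigma)$, a codimension-two subspace of $T_pM$. The crucial structural fact I would record first is that $S$ is nondegenerate. Indeed, since $\Sigma$ is null, $T_p\Sigma^\perp\subset T_p\Sigma$, so the radical of $q$ on $T_p\Sigma$ is exactly $T_p\Sigma\cap T_p\Sigma^\perp = T_p\Sigma^\perp = \operatorname{span}(\xi)$. Because the splitting $T_p\Sigma = S\oplus T_p\Sigma^\perp$ puts the screen complementary to this radical, the restriction of $g$ to $S$ has no null directions; by signature it is in fact positive definite. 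This nondegeneracy is what drives the entire argument.

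Given that $S$ is nondegenerate, its orthogonal complement $S^\perp$ in $T_pM$ is two-dimensional and nondegenerate, and since $g$ is Lorentzian with $S$ spacelike, $S^\perp$ inherits signature $(1,1)$. The given null vector $\xi$ lies in $S^\perp$, being orthogonal to all of $T_p\Sigma$ and hence to $S$. In a two-dimensional Lorentzian plane the null cone consists of exactly two lines, so there is a null direction in $S^\perp$ transverse to $\xi$; I would pick a representative $N_0$ with $N_0\cdot\xi\neq 0$ and rescale to $N = -N_0/(N_0\cdot\xi)$. By construction $N\cdot N = 0$, $N\cdot W = 0$ for $W\in S$, and $N\cdot\xi = -1$, which gives existence. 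For uniqueness, any $N'$ meeting the three conditions lies in $S^\perp$ and is null, hence on one of the two null lines; since $N'\cdot\xi = -1 \neq 0 = \xi\cdot\xi$, it cannot be proportional to $\xi$, so $N' = \mu N$, and the normalization forces $\mu = 1$.

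To pass from points to the bundle, I would note that the construction is algebraic in the smoothly varying data $(g, S, \xi)$, so on any open set carrying a smooth choice of $\xi$ it yields a smooth section $N$, whose span defines a rank-one subbundle $tr(T\Sigma)\subset TM$. Two final checks complete the argument: first, rescaling $\xi\mapsto f\xi$ sends $N\mapsto N/f$, so the line $\operatorname{span}(N)$—and thus the bundle—is independent of the choice of $\xi$, which is precisely what makes $tr(T\Sigma)$ well defined; second, any competing bundle would have to contain, for each $\xi$, a vector satisfying \eqref{N_properties}, which by the pointwise uniqueness above must coincide with $N$, forcing the bundles to agree. I expect the only genuine subtlety to be the opening step—confirming that the screen is nondegenerate so that $S^\perp$ is a Lorentzian plane—after which the existence of a second null direction and all the normalizations are routine.
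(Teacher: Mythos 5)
Your argument is correct: the key observations—that the screen is a nondegenerate (indeed positive definite) complement to the radical $T_p\Sigma^\perp$ of $q$, that $S^\perp$ is therefore a Lorentzian plane containing $\xi$ as one of its two null lines, and that normalizing the other null line by $N\cdot\xi=-1$ pins down $N$ uniquely—are exactly the right ones, and the globalization via the scaling behavior $\xi\mapsto f\xi$, $N\mapsto N/f$ is handled properly. Note, however, that the paper itself gives no proof of this statement: it is quoted verbatim from Duggal and Bejancu, whose own proof (via an explicit formula $N=\frac{1}{g(V,\xi)}\bigl[V-\frac{g(V,V)}{2\,g(V,\xi)}\,\xi\bigr]$ for a transversal section $V$) is essentially the constructive version of the fiberwise argument you give, so your proposal agrees in substance with the standard one in the cited reference.
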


Thus, $tr(T\Sigma)\perp Scr(T\Sigma)$ and since $tr(T\Sigma)$ is
$1-$dimensional, $\XX (tr(T\Sigma))=Span( N )$.  By construction we
have $tr(T\Sigma)\cap T\Sigma = \{0\}$ and have decomposed $TM$
to
\begin{equation}
\label{tr_decomp}
  TM = Scr(T\Sigma)\oplus(T\Sigma^{\perp}\oplus tr(T\Sigma) )=
  T\Sigma\oplus tr(T\Sigma)
\end{equation}
where $TM$ is restricted to $\Sigma$.

We can use~\eqref{tr_decomp} to decompose the connection $\Del$
on $M$ as follows.  Let \hbox{$X,\,Y\in\XX (T\Sigma)$} and $V\in\XX
(tr(\Sigma))$.  Then, since $tr(T\Sigma)$ has $rank\, 1$, we can write
\begin{align}
  \Del_X\,Y & = D_X\,Y + B(X,Y)\,N \label{gauss} \\
  \Del_X\,V & = -A_N\,X + \tau(X)\,N, \label{weingarten}
\end{align}
where $D_X\,\,Y,A_N\,X \in \XX(T\Sigma)$.  Equation
\eqref{gauss} can be thought of as the Gauss formula for the
lightlike hypersurface and~\eqref{weingarten} as the lightlike
Weingarten formula.  Under this decomposition, $D_X\,Y$ is a
connection on $\Sigma$, but, as discussed in Duggal~\cite{DB96},
this connection is not, in general, metric-compatible.

\subsection{Example}

To investigate the Duggal decomposition, consider the line element
\begin{align}
  ds^2 = -2du\,dv + q_{ij}\,dx^i\,dx^j
\label{example}
\end{align}
Choose a screen $Scr(T\Sigma)=Span\left(\{X_1,\,X_2\}\right)$ by
setting
\begin{align}
\label{X_k}
  X_k = \frac{\partial}{\partial\,x^k} + \alpha_k\,\xi
\end{align}
where $\xi=\eta\,\frac{\partial}{\partial\,v}\in\XX (T\Sigma^{\perp})$ and $\alpha_k$ is a function of $v,\,x^1\text{ and } x^2$.  All possible screens can be obtained by choosing different
$\alpha_k$.  For $N\in \XX (tr(T\Sigma))$ satisfying Theorem
\ref{DuggalThm}, the vectors $\{\xi,X_1,X_2,N\}$ form a basis
for $\XX (TM)$.  As shown in Duggal~\cite{DB96}, the covariant derivatives take
the form
\begin{subequations}
\begin{align}
  \Del_{X_j}\,X_i & = \gamma^0\,_{ij}\,\xi + \gamma^k\,_{ij}\,X_k + B_{ij}\,N  \label{gauss1}\\
  \Del_{X_j}\, \xi & = \gamma^0\,_{0j}\,\xi + \gamma^k\,_{0j}\,X_k \label{gauss2}\\
  \Del_{\xi}\, X_i & = \gamma^0\,_{i0}\,\xi + \gamma^k\,_{i0}\,X_k \label{gauss3}\\
  \Del_{\xi}\, \xi & = \gamma^0\,_{00}\,\xi \label{gauss4}
\end{align}
\end{subequations}
and
\begin{subequations}
\begin{align}
  \Del_{X_j}\, N & = -A^k\,_j\,X_k + \tau_j\,N \label{wein1}\\
  \Del_{\xi}\, N & = -A^k\,_0\,X_k + \tau_0\,N \label{wein2}
\end{align}
\end{subequations}
Duggal defines the last term in equation~\eqref{gauss1} to be the
second fundamental form,
\begin{equation}
II(X,Y)=B(X,Y)\,N
\end{equation}
Thus, equations \eqref{gauss1}--\eqref{gauss4} decompose $\Del$ to a form
similar to Gauss' formula~\eqref{gauss}.  Once the coefficients
in~\eqref{gauss1}--\eqref{gauss4} are known, a connection $D$ on $\Sigma$ has
been constructed.  The properties of the Levi-Civita connection $\Del$ on
$\Sigma$ can be used to show that
\begin{subequations}
\begin{align}
\label{duggal_connect}
  \gamma^k\,_{ij} & = \frac12\,q^{kh}\bigl(
	X_j(q_{ih}) + X_i(q_{hj}) - X_h(q_{ij}) \bigr) \\
  \gamma^0\,_{ij} & = -g_{ik}\,A^k\,_j \\
  \gamma^0\,_{0j} & = -\tau_j \\
  \gamma^k\,_{0j} & = \frac12\,g^{ik}\,\xi(g_{ij}) \\
  \gamma^0\,_{00} & = -\tau_0 \\
  B_{ij} & = \frac12\, \xi(g_{ij}).
\end{align}
\end{subequations}
and the induced covariant derivative on $T\Sigma$ becomes
\begin{subequations}
\begin{align}
  D_{X_j}\,X_i & = \gamma^0\,_{ij}\,\xi + \gamma^k\,_{ij}\,X_k\\
  D_{X_j}\, \xi & = \gamma^0\,_{0j}\,\xi + \gamma^k\,_{0j}\,X_k\\
  D_{\xi}\, X_i & = \gamma^0\,_{i0}\,\xi + \gamma^k\,_{i0}\,X_k\\
  D_{\xi}\, \xi & = \gamma^0\,_{00}\,\xi
\end{align}
\end{subequations}

\subsection{Uniqueness}
\label{Unique}

In general, the above construction of the induced covariant derivative $D$
depends on the choice of screen, that is, on the choice of $X_k$.  However,
Duggal further proves that under certain conditions there is a unique induced
connection on $\Sigma$.

\begin{theorem}[Duggal]
\label{duggal_IIvanish}
Let $(\Sigma,\,q,\,Scr(T\Sigma))$ be a lightlike hypersurface of $(M,g)$.
Then the induced connection $D$ is unique, that is, $D$ is independent of
$Scr(T\Sigma)$, if and only if the second fundamental form $II$ vanishes
identically on $\Sigma$.  Furthermore, in this case, $D$ is torsion free and
metric compatible.
\end{theorem}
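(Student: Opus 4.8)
The plan is to reduce the whole statement to a single structural fact: \emph{the second fundamental form is independent of the screen}. To establish this I would pair the lightlike Gauss formula \eqref{gauss}, $\Del_X Y = D_X Y + B(X,Y)\,N$, with the null generator $\xi$. Since $D_X Y\in\XX(T\Sigma)$ while $\xi\in\XX(T\Sigma^\perp)$ is orthogonal to every tangent vector, and since $N\cdot\xi=-1$ by \eqref{N_properties}, pairing kills the $D_X Y$ term and yields $B(X,Y) = -\,g(\Del_X Y,\xi)$. The right-hand side involves only the ambient Levi-Civita connection $\Del$ and the section $\xi$ spanning the intrinsic radical $T\Sigma^\perp$, and neither of these depends on the choice of $Scr(T\Sigma)$. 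Hence $B$ is screen-independent, and because $N\neq0$ the condition $II\equiv0$ is equivalent to $B\equiv0$; in particular the vanishing of $II$ is itself a screen-independent condition.

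For the implication $II\equiv0\Rightarrow D$ unique, I would observe that when $B\equiv0$ the Gauss formula \eqref{gauss} collapses to $\Del_X Y = D_X Y$. Thus $\Del_X Y$ already lies in $\XX(T\Sigma)$, and the tangential projection returns it unchanged regardless of the projection direction $N$; therefore $D_X Y=\Del_X Y$ does not depend on $Scr(T\Sigma)$. The two remaining properties then descend directly from $\Del$. Torsion-freeness follows from $D_XY - D_YX - [X,Y] = \Del_XY - \Del_YX - [X,Y] = 0$, and metric compatibility follows since, for $X,Y,Z\in\XX(T\Sigma)$, one has $X\!\left(q(Y,Z)\right) = X\!\left(g(Y,Z)\right) = g(\Del_XY,Z)+g(Y,\Del_XZ) = q(D_XY,Z)+q(Y,D_XZ)$, using $q=g$ on tangent vectors and $D_XY=\Del_XY\in\XX(T\Sigma)$.

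For the converse, $D$ unique $\Rightarrow II\equiv0$, I would write \eqref{gauss} for two screens producing transversals $N$ and $N'$. By the first paragraph the coefficient $B$ is the same for both, so subtracting the two expressions for the (screen-independent) ambient derivative $\Del_X Y$ gives $D_X Y - D'_X Y = B(X,Y)\,(N'-N)$. It then remains to show that the transversal genuinely moves when the screen moves. Solving the defining conditions \eqref{N_properties} in the coordinates of \eqref{example}--\eqref{X_k} gives the explicit transversal
\[
  N \;=\; \frac{1}{\eta}\,\partial_u \;+\; \frac{\eta}{2}\,q^{ij}\alpha_i\alpha_j\,\partial_v \;+\; q^{jk}\alpha_k\,\partial_{x^j},
\]
whose dependence on the screen functions $\alpha_k$ is manifest. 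A nontrivial choice of the $\alpha_k$ therefore produces $N'\neq N$, so if $B(X_0,Y_0)\neq0$ at some point one could select such a screen and obtain $D_{X_0}Y_0\neq D'_{X_0}Y_0$ there, contradicting uniqueness. Hence $B\equiv0$, i.e.\ $II\equiv0$.

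The one genuinely delicate step is this last point in the forward direction: verifying that the transversal $N$ really depends on the screen, so that $B(X,Y)(N'-N)=0$ for all screens forces $B=0$ rather than merely $N'=N$. Here some care is needed because $N'-N$ in fact lands in $\XX(T\Sigma)$ (the $\partial_u$-component of $N$ above is screen-independent), which is exactly what makes the identity $D_X Y - D'_X Y = B(X,Y)(N'-N)$ internally consistent, both sides being tangent to $\Sigma$. Once the explicit formula for $N$ (or, equivalently, the abstract transformation law of the transversal bundle under a change of screen) is in hand, the argument closes; the remaining verifications of torsion-freeness and metric compatibility are routine.
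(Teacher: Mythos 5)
The paper does not actually prove this statement: Theorem~\ref{duggal_IIvanish} is quoted as a result of Duggal and Bejancu, with the reader referred to~\cite{DB96}, so there is no in-text proof to compare against. Judged on its own, your argument is correct and reconstructs the standard proof. Pairing the Gauss formula with $\xi$ and using $g(D_XY,\xi)=0$ together with $g(N,\xi)=-1$ does give $B(X,Y)=-g(\Del_XY,\xi)$, so $B$ is screen-independent for fixed $\xi$ (and $II=B(X,Y)\,N$ vanishes iff $B$ does, since $N\neq0$). Your two directions are sound: when $B\equiv0$ the Gauss formula collapses to $D_XY=\Del_XY\in\XX(T\Sigma)$, from which uniqueness, torsion-freeness, and metric compatibility all descend from the Levi-Civita properties of $\Del$ exactly as you write; conversely the identity $D_XY-D'_XY=B(X,Y)(N'-N)$, valid because $B$ is common to both screens, forces $B\equiv0$ once one knows the transversal genuinely moves with the screen. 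The one place your write-up falls short of the stated generality is that your explicit formula for $N$ is derived only for metrics of the form~\eqref{example}; for an arbitrary lightlike hypersurface you need the abstract transformation law, e.g.\ if the screen basis changes by $W'_a=W_a+c_a\,\xi$ then $N'=N+c^aW_a+\tfrac12\,c^ac_a\,\xi$, so $N'-N$ is nonzero (and, as you correctly flag, tangent to $\Sigma$, which is what makes the displayed identity consistent) whenever some $c_a\neq0$. This step also quietly uses that the screen has positive rank, which holds for hypersurfaces of spacetimes of dimension at least three. With that one substitution the proof is complete; it is also worth noting that your torsion-freeness computation actually holds for every screen, since the transversal component of $\Del_XY-\Del_YX-[X,Y]$ vanishes by symmetry of $B$, so metric compatibility is the only property that genuinely requires $II\equiv0$.
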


Theorem~\ref{duggal_IIvanish} implies that if $B_{ij} \neq 0$, or
equivalently, $\xi(g_{ij})\neq 0$ for all $i,\,j$, then there is a need for a
new method to define a covariant derivative on $\Sigma$.

Returning to our example~\eqref{example}, if the 2-metric $q_{ij}$ is that of
a plane, that is, if we consider a null plane $\Sigma=\{u=0\}$ in Minkowski
space $\M^4$ in null rectangular coordinates with line element
 \begin{align}
   ds^2 = -2\,du\,dv+dx^2+dy^2,
 \end{align}
then it is straightforward to show that $B_{ij}=0$, so that we obtain a unique
connection on $\Sigma$ regardless of the screen chosen.  It is a useful
exercise to check this explicitly, using
\begin{align}
    X_1=\alpha\,\frac{\partial}{\partial v} + \frac{\partial}{\partial x},\quad X_2=\beta\,\frac{\partial}{\partial v} +
    \frac{\partial}{\partial y},\quad \xi =\eta\,\frac{\partial}{\partial
    v}
 \end{align}
with arbitrary $\alpha$, $\beta$, $\eta$, which implies
 \begin{equation}
   N=\frac{1}{\eta}\left(\frac{\partial}{\partial u} + \left(\frac{\alpha^2+\beta^2}{2}\right)\,\frac{\partial}{\partial v} +  \alpha
\frac{\partial}{\partial x} + \beta\frac{\partial}{\partial
y}\right)
 \end{equation}

However, if the 2-metric $q_{ij}$ is that of a sphere, that is, if we consider
a null cone $\Sigma=\{u=0\}$ in Minkowski space $\M^4$ in null spherical
coordinates with line element
\begin{equation}
ds^2 = -2\,du\,dv+r^2\,d\theta^2+r^2\,\sin^2\theta\, d\phi^2
\label{cone}
\end{equation}
with $r=(v-u)/\sqrt{2}$, then $B_{ij}\ne0$, since $q_{ij}$ depends on $r$.
Thus, Duggal's construction using the screen distribution and transversal
vector bundle does not yield a preferred, metric-compatible, torsion-free
connection on the null cone.

An alternate construction will be developed in this paper.

\subsection{Connections via the Pullback}
\label{Pullback}

Another possible way to construct a connection on a submanifold is to use
pullbacks.  Although this method often fails, we will show in subsequent
sections that a modified version of this method has wide applicability.

Let $\Sigma$ be a submanifold of $\M^4$ with $\varphi:\Sigma \rightarrow \M^4$
an embedding of $\Sigma$ into $\M^4$.  If $w_a$ is a 1-form on $\M^4$, its
\emph{pullback} $\varphi^*\, w_a$ is a 1-form on $\Sigma$.  We can therefore
attempt to define a covariant derivative $D$ on $\Sigma$ by pulling back the
covariant derivative operator $\Del$ on $\M^4$, that is, we seek an operator
$D$ satisfying
\begin{equation}
\label{pullback-formal}
D_a (\varphi^*w_b)=\varphi^*(\Del_a w_b).
\end{equation}

We adopt a less formal notation, and write $\pback{w_b}$ instead of
$\varphi^*w_b$ for the pullback of $w_b$ to~$\Sigma$.  With this new notation,
the pullback of the covariant derivative is written
\begin{equation}
D_a \pback{w_b}=\pback{\Del_a w_b}.
\label{pullback}
\end{equation}

It is easily checked that the Duggal connection on a null plane in Minkowski
space, as defined in Section~\ref{Duggal} and shown to be unique in
Section~\ref{Unique}, also satisfies~\eqref{pullback}, and is in fact uniquely
defined by this condition.  In general, however,~\eqref{pullback} alone is not
enough to determine a well-defined covariant derivative $D$ on $\Sigma$.

%\subsection{Well-Defined Covariant Derivative}

The problem is that there are many 1-forms $w$ with the same pullback
$\pback{w}$; for $D$ to be well-defined on $\Sigma$, it must not depend on
this choice.  If $\Sigma=\{u=0\}$, then
\begin{equation}
\pback{w+f\,du}=\pback{w}
\end{equation}
since $\pback{du}=0$.  Thus,~\eqref{pullback} will be well-defined
if (and only if)
\begin{equation}
\pback{\Del_X\, du}=0
\end{equation}
for all $X\in \XX(T\Sigma)$.

As an example, consider the null cone in Minkowski space given by $u=t-r=0$.
Then
\begin{equation}
\pback{\Del_X du} = \pback{\Del_X dt} - \pback{\Del_X dr} \ne 0
\end{equation}
since $\pback{\Del_X dt}=0$ but $\pback{\Del_X dr}\ne0$.  Thus, the pullback
method fails on the null cone.

In terms of coordinates $\{x^i\}$ on the surface $\Sigma=\{u=0\}$, extended to
a neighborhood of~$\Sigma$, it is easily seen that condition~\eqref{pullback}
for the existence of a well-defined pullback connection is equivalent to the
vanishing of the appropriate Christoffel symbols, namely
\begin{equation}
\Gamma^u{}_{ij} = 0
\label{Christoffel}
\end{equation}
In the case of the null cone, we have
\begin{equation}
\Gamma^u\,_{\theta\theta} = -\frac{v-u}{2} = -r
= \frac{1}{\sin^2\theta} \,\Gamma^u\,_{\phi\phi}
\end{equation}
The simple dependence of these terms on $r$ suggests a possible strategy:
remove the $r$-dependence by rescaling the line element by $r^2$, after which
these Christoffel symbols will vanish, and a well-defined covariant derivative
can be defined.

We implement this strategy in the remainder of the paper

\section{Asymptotically Flat Spacetimes}
\label{geroch}

There is a well-known context in general relativity for studying a null
submanifold, namely the construction of null infinity for an asymptotically
flat spacetime.  Since much of that construction will be useful in our more
general context, we briefly review it here.  Our presentation follows the
classic 1976 paper of Geroch~\cite{rG76}.

%\subsection{Physical vs.\ Unphysical Spacetime}
%\label{physical-unphysical}

An \emph{asymptote} of a spacetime $(\widetilde M,\ \widetilde g_{ab})$ is a
manifold $M$ with boundary $\Scri$, together with a smooth Lorentzian metric
$g_{ab}$ on $M$, a smooth function $\Omega$ on $M$, and a diffeomorphism by
means of which we identify $\widetilde M$ and $M-\Scri$, satisfying the
following conditions:
%\goodbreak
\begin{enumerate}
\item
On $\widetilde M$, $g_{ab} = \Omega^{2}\, \widetilde g_{ab}$;
\label{AF:1}
\item
On $\Scri$,
% \begin{enumerate}
%\item
$\Omega = 0$, \label{AF:2a}
%\item
$\Del_a\Omega \neq 0$, and \label{AF:2b}
%\item
$g^{ab}(\Del_a\Omega)( \Del_b \Omega)=0$,
\label{AF:2c}
%\end{enumerate}
\end{enumerate}
where $\Del_a$ denotes covariant differentiation on $M$.

\goodbreak

The Ricci curvature tensors of the conformally related metrics $g_{ab}$ and
$\widetilde g_{ab}$ are related by
\begin{align}
    \widetilde R_{ab}  =  R_{ab} & + (s-2)\, \Omega^{-1}\, \Del_a \Del_b\, \Omega
                + \Omega^{-1}\, g_{ab} \Del^m \Del_m\, \Omega \nonumber \\
                & - (s-1)\, \Omega^{-2}\, g_{ab}\, \left(\Del^{m} \Omega\right)
             \left(\Del_{m}\, \Omega\right)
\label{Ric}
\end{align}
where $s$ is the dimension of $M$.
% The Ricci scalar curvatures are thus related by
%\begin{align}
%\widetilde R
%  &= \Omega^2\, R + 2(s-1)\, \Omega\, \Del^m \Del_m\, \Omega
%	- s(s-1) \left(\Del^{m} \Omega\right)\, \left(\Del_{m} \Omega\right)
%\label{RicScalar}
%\end{align}
%(Note that we raise and lower indices with different metrics on the two sides
%of the equation when passing from~\eqref{Ric} to~\eqref{RicScalar}.)

We introduce the normal vector field
\begin{equation}
n^a = g^{ak}\, \Del_k \Omega = \Del^a \Omega
\end{equation}
and compute
\begin{equation}
\Lie_n\, g_{ab} = 2\, \Del_a \Del_b \Omega
\label{LiegOmega}
\end{equation}
Following Geroch~\cite{rG76}, we assume that the physical stress-energy tensor
vanishes asymptotically to order 2, that is, we assume that
$\Omega^{-2}\widetilde{R}^a{}_b$ admits a smooth extension to $\Scri$, which
in turn implies that $\Omega\widetilde{R}_{ab}$ is zero on $\Scri$.
\footnote{We will weaken this assumption below, which will affect the
numerical factor in~\eqref{ODE}.}
As shown by Geroch~\cite{rG76}, we can then use the gauge freedom in the
choice of $\Omega$ to ensure that the pullback to $\Scri$ of the RHS
of~\eqref{LiegOmega} vanishes.  Explicitly, by solving the ordinary
differential equation
\begin{equation}
n^c \Del_c \ln \omega = -\frac{1}{s} \>\Box\Omega
\label{ODE}
\end{equation}
along each integral curve of $n^a$ on $\Scri$, where
$\Box \Omega = g^{ab}\Del_a \Del_b\, \Omega$
is the d'Alembertian, and setting $\overline\Omega=\omega\Omega$, so that
\begin{eqnarray}
\overline g_{ab} &=& \omega^2\, g_{ab} \\
\overline n^a &=& \omega^{-1}\, n^a
\label{conformal}
\end{eqnarray}
then
\begin{equation}
\pback{\Lie_{\overline n}\, \overline g_{ab}} = 0
\label{KV}
\end{equation}

Thus, given an asymptotically flat spacetime satisfying the original Geroch
conditions, one can assume without loss of generality that $n$ is in fact a
Killing vector field.

A \textit{divergence-free conformal frame} $(g_{ab},n^c)$
satisfying~\eqref{KV} (where we have dropped the bars) has an additional
property: the pullback connection is well-defined on the null submanifold
$\Scri$.  To see this, we set
\begin{equation}
q_{ab} = \pback{g_{ab}}
\label{gback}
\end{equation}
and note that
\begin{equation}
\label{Lie-q}
\Lie_{n}\,q_{ab} = \pback{\Lie_{n}\, g_{ab}} = \pback{2\,\Del_a\, n_b}.
\end{equation}
Setting $u=\Omega$, the 1-form $n_b$ is just $du$, so that
\begin{equation}
\left(\Lie_n\,g\right)_{ij} =  -2\,\Gamma^u\,_{i j}.
\end{equation}
and the result now follows by comparison with~\eqref{Christoffel}.

\section{Covariant Derivatives on Null Submanifolds}
\label{first-theorem}

We now adapt the results from the previous section for asymptotically flat
spacetimes to more general null submanifolds.

Recall that a vector field $v^a$ is \emph{Killing} if $\Lie_v g_{ab}=0$.
Given a null surface, our first result is that if the normal vector is
Killing, then there is a well-defined covariant derivative on $\Sigma$.
Following Geroch, we then consider the conditions under which a conformally
related metrics admits a well-defined covariant derivative.  When appropriate
conditions are satisfied, we further propose that the resulting notion of
covariant derivative be regarded as the natural choice on the null
submanifold.

\begin{lemma}[Covariant Derivative on $\Sigma$]
\label{CovDerivI}
Let $\Sigma=\{u=0\}$ be a null submanifold of a given spacetime
($M$,$g_{ab}$), let $n_a=\Del_a u$, and let $q_{ab}$ be the induced degenerate
metric on $\Sigma$, as in~\eqref{gback}.  If $\Lie_n q_{ab}=0$ on $\Sigma$,
then the connection defined by the pullback, as in~\eqref{pullback}, is
well-defined.
\end{lemma}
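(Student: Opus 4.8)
The plan is to reduce the hypothesis $\Lie_n q_{ab}=0$ to the well-definedness criterion already established in Section~\ref{Pullback}. Recall that the pullback connection defined by~\eqref{pullback} is well-defined precisely when $\pback{\Del_X\,du}=0$ for every $X\in\XX(T\Sigma)$, equivalently when the Christoffel symbols satisfy $\Gamma^u{}_{ij}=0$ as in~\eqref{Christoffel}. Thus the entire argument reduces to showing that the vanishing of $\Lie_n q_{ab}$ is equivalent to this condition.

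First I would record two elementary facts. Since $u$ is a scalar, the normal one-form is $n_b=\Del_b u$, which is just $du$, so that $\Del_a n_b=\Del_a\Del_b u$ is symmetric in $a,b$ because $\Del$ is torsion-free. Because $\Sigma$ is null, its normal satisfies $n^a n_a=g^{ab}(\Del_a u)(\Del_b u)=0$ on $\Sigma$; since a vector $v$ is tangent to $\Sigma$ exactly when $v^a n_a=0$, this shows that $n^a$ is itself tangent to $\Sigma$. This tangency is what makes it legitimate to take the Lie derivative along $n$ of the induced tensor $q_{ab}$ intrinsically on $\Sigma$.

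Next I would compute $\Lie_n q_{ab}$ along exactly the chain used in~\eqref{Lie-q}. Because $n$ is tangent to $\Sigma$, its flow preserves $\Sigma$, so the pullback commutes with the Lie derivative, giving $\Lie_n q_{ab}=\pback{\Lie_n g_{ab}}$. Using metric-compatibility and torsion-freeness of $\Del$ together with $n_b=\Del_b u$, one has $\Lie_n g_{ab}=2\,\Del_a n_b$, and hence $\Lie_n q_{ab}=2\,\pback{\Del_a\,du}$. Therefore $\Lie_n q_{ab}=0$ if and only if $\pback{\Del_X\,du}=0$ for all tangent $X$, which is exactly the criterion recalled above, completing the proof.

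The step I expect to be the main obstacle is justifying that the pullback commutes with the Lie derivative, namely $\Lie_n q_{ab}=\pback{\Lie_n g_{ab}}$. This identity is precisely where the null hypothesis does its work: it is valid only because $n^a$ is tangent to $\Sigma$, so that $\Lie_n$ acts on a tensor that already lives on $\Sigma$ rather than requiring a transverse extension off the surface. For a non-null hypersurface the normal would be transverse and this reduction would break down, so isolating the null condition as the source of tangency is the conceptual crux of the argument.
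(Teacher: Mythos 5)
Your proof is correct and follows essentially the same route as the paper's: both arguments identify the ambiguity in extending a 1-form off $\Sigma$ as a multiple of $n_b = du$, and both dispose of it via the identity $\pback{\Del_a n_b} = \tfrac12\,\Lie_n q_{ab}$ from~\eqref{Lie-q}; you package this as the criterion $\pback{\Del_X\,du}=0$ already recorded in Section~\ref{Pullback}, while the paper re-derives that criterion inline by computing $\pback{\Del_a(k\,n_b)}$ directly. Your added justification that the null condition makes $n^a$ tangent to $\Sigma$ (so that $\Lie_n$ commutes with the pullback) is a point the paper leaves implicit, but it does not change the substance of the argument.
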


\begin{proof}
Let $w_b$ be any 1-form on the surface $\Sigma$.  We would like to define a
covariant derivative using the pullback, $D_a\,w_b = \pback{\Del_a\,W_b}$,
where $W_b$ is a 1-form on $M$ such that $\pback{W_b}=w_b$, but we must show
that this is well defined.

On $\Sigma$, $\pback{n_b}=0$, since $n_b\,dx^b = du$.  Let $V_b=W_b+k\,n_b$
where $k$ is any function.  $V_b$ is the most general 1-form with the same
pullback as $W_b$,
\begin{equation}
\pback{V_b}=\pback{W_b}+\pback{k\,n_b}=w_b + 0.
\end{equation}
Consider the pullback of the derivative of $V_b-W_b$,
\begin{eqnarray}
\pback{\Del_a\,(V_b-W_b)}
  &=& \pback{\Del_a\,(k\,n_b)} \nonumber \\
  &=& \left(\pback{\Del_a\,k}\right)\,\pback{n_b}
	+ k\big|_{u=0}\,\left(\pback{\Del_a\,n_b}\right)\nonumber \\
  &=& \frac12 \,k\big|_{u=0}\,\left(\Lie_n q_{ab}\right)
   =  0
\end{eqnarray}
by assumption, where we have used~\eqref{Lie-q} in the penultimate equality.
\end{proof}

\begin{lemma}[Conformal Killing Vector]
\label{confLie}
With $\Sigma$, $n^a$, and $q_{ab}$ as above, if $\Lie_n q_{ab}=f\,q_{ab}$,
then there exists a unique conformal factor $\omega$, up to a constant
factor, such that $\Lie_{\overline{n}} \overline{q}_{ab}=0$, with
$\overline{n}^a$ and $\overline{q}_{ab}$ as in~\eqref{conformal}.
\end{lemma}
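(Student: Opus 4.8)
The plan is to compute $\Lie_{\overline n}\,\overline q_{ab}$ explicitly in terms of $\omega$ and the given data $\Lie_n q_{ab}=f\,q_{ab}$, reduce the vanishing of this tensor to a single scalar transport equation along the integral curves of $n^a$, and integrate that equation as in Section~\ref{geroch}. First I would record the conformal behaviour on $\Sigma$: from~\eqref{conformal} we have $\overline n^a=\omega^{-1}n^a$, and pulling back $\overline g_{ab}=\omega^2 g_{ab}$ gives $\overline q_{ab}=\omega^2 q_{ab}$. The structural fact driving the computation is that $n^a$ spans the kernel of the degenerate metric, i.e.\ $q_{ab}\,n^b=0$ on $\Sigma$; indeed, for any $X\in\XX(T\Sigma)$ one has $q_{ab}X^a n^b=g_{ab}X^a n^b=X^a\Del_a u=X(u)=0$, since $X$ is tangent to $\{u=0\}$.

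Next I would expand the Lie derivative of the rescaled metric along the rescaled generator using the scaling identity
\begin{equation}
\Lie_{hX}T_{ab}=h\,\Lie_X T_{ab}+(\Del_a h)\,X^c T_{cb}+(\Del_b h)\,X^c T_{ac},
\end{equation}
valid for a torsion-free $\Del$, with $h=\omega^{-1}$, $X=n$, and $T_{ab}=\overline q_{ab}=\omega^2 q_{ab}$. The two inhomogeneous terms each carry a factor $n^c q_{cb}$ or $n^c q_{ac}$ and hence vanish on $\Sigma$ by the kernel property above, while the surviving term expands through the Leibniz rule together with $\Lie_n q_{ab}=f\,q_{ab}$. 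I expect the whole expression to collapse to
\begin{equation}
\Lie_{\overline n}\,\overline q_{ab}=\bigl(2\,n^c\Del_c\omega+\omega\,f\bigr)\,q_{ab}.
\end{equation}

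Since $q_{ab}$ is not identically zero, requiring $\Lie_{\overline n}\,\overline q_{ab}=0$ is equivalent to the scalar equation
\begin{equation}
n^c\Del_c\ln\omega=-\frac12\,f,
\end{equation}
a first-order linear ODE for $\ln\omega$ along each integral curve of $n^a$, of precisely the type solved in~\eqref{ODE}. Integrating it curve by curve gives existence, and any two solutions differ by a quantity annihilated by $n^c\Del_c$, that is, constant along the flow of $n$; this is exactly the freedom ``up to a constant factor'' asserted in the statement, completing the argument.

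The step I expect to require the most care is the vanishing of the two inhomogeneous terms, since it hinges on $n^a$ being at once the null normal and a tangent direction of $\Sigma$, and therefore lying in the kernel of $q_{ab}$. Once that is secured, the remaining manipulation is routine and the conclusion follows from the same transport-equation integration already carried out in Section~\ref{geroch}.
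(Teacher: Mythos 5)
Your proposal is correct and takes essentially the same route as the paper: both reduce $\Lie_{\overline n}\,\overline q_{ab}=0$ to the transport equation $n^c\Del_c\ln\omega=-\tfrac12 f$ along the integral curves of $n^a$ and integrate curve by curve. The only cosmetic difference is that you kill the inhomogeneous terms via the kernel property $q_{ab}n^b=0$ on $\Sigma$, while the paper computes $\Lie_{\overline n}\,\overline g_{ab}$ on $M$ and drops the terms proportional to $n_a=\Del_a u$ upon pullback --- the same fact in different form.
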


\begin{proof}
This is essentially the same as the result in Geroch~\cite{rG76} quoted above,
and the proof is similar.  We have
\begin{eqnarray}
\Lie_{\overline{n}}\,\overline{g_{ab}}
  &=& \overline{n}^c\,\Del_c \overline{g}_{ab}
	+ \overline{g}_{cb}\,\Del_a \overline{n}^c
	+ \overline{g}_{ac}\,\Del_b \overline{n}^c \nonumber \\
  &=& \omega^{-1}\,n^c\,\Del_c\, ( \omega^2\,g_{ab} )
	+ \omega^2\,g_{cb}\,\Del_a\,\left(\omega^{-1}\,n^c \right)
	+ \omega^2\,g_{ac}\,\Del_b\,\left( \omega^{-1}\,n^c \right)
	\nonumber \\
  &=& n^c\,\left( 2 (\Del_c\, \omega)\,g_{ab} - g_{cb}\,\Del_a\,\omega
	- g_{ac}\,\Del_b\,\omega \right) + \omega\,\Lie_n\,g_{ab}.
\end{eqnarray}
and pulling both sides back to $\Sigma$ results in
\begin{equation}
2 (n^c\, \Del_c\,\omega) \,q_{ab} =  -\omega\, f\,q_{ab}.
\end{equation}
Letting $\dot{\omega} = n^c\,\Del_c\,\omega $ the equation
simplifies to
\begin{equation}
\label{conformal_ode}
  \frac{\dot{\omega}}{\omega} = -\frac{f}{2}
\end{equation}
This ordinary differential equation will have a unique solution, up to a
constant factor, along each integral curve of $n^a$, yielding an $\omega$
such that the conformal transformation will result in $\Lie_{\overline{n}}
\overline{q_{ab}}=0$.
\end{proof}

These two lemmas immediately yield the following result:

\begin{theorem}[Covariant Derivative with conformal transformation]
\label{CovDerivII}
With $\Sigma$, $n^a$, and $q_{ab}$ as above, if $\Lie_n q_{ab}=f\,q_{ab}$ on
$\Sigma$, then the conformal pullback method produces a well-defined covariant
derivative, $D$, on $\Sigma$.
\end{theorem}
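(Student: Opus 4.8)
The plan is to chain the two preceding lemmas together: Lemma~\ref{confLie} supplies a conformal rescaling that annihilates the right-hand side $f\,q_{ab}$, and Lemma~\ref{CovDerivI} then certifies that the pullback connection for the rescaled metric is well-defined. Concretely, I would first invoke Lemma~\ref{confLie} with the given data $\Lie_n q_{ab}=f\,q_{ab}$ to produce the conformal factor $\omega$ solving the ODE~\eqref{conformal_ode}, together with the rescaled metric $\overline{g}_{ab}=\omega^2\,g_{ab}$ and rescaled normal $\overline{n}^a=\omega^{-1}\,n^a$ as in~\eqref{conformal}. By that lemma, $\Lie_{\overline{n}}\,\overline{q}_{ab}=0$ on $\Sigma$, where $\overline{q}_{ab}=\pback{\overline{g}_{ab}}$.

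Next I would check that $(\M,\overline{g}_{ab})$, with $\Sigma$ as its null submanifold, meets every hypothesis of Lemma~\ref{CovDerivI}, so that the lemma may be applied verbatim in the rescaled spacetime. Two points need verification. First, $\Sigma$ remains null for $\overline{g}_{ab}$, since a conformal rescaling preserves null directions and hence the null character of the surface. Second, I must exhibit a defining function whose $\overline{g}$-gradient is $\overline{n}_a$ on $\Sigma$; the natural candidate is $\overline{u}=\omega\,u$, for which a short computation gives $\overline{\Del}_a\overline{u}=u\,\Del_a\omega+\omega\,\Del_a u$, so that on $\Sigma=\{u=0\}$ one has $\overline{n}_a=\overline{\Del}_a\overline{u}$ and $\pback{\overline{n}_a}=0$. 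This is exactly the structure Lemma~\ref{CovDerivI} requires, and it is the one genuinely bookkeeping-sensitive step: the factor of $\omega$ relating $\overline{n}^a=\omega^{-1}\,n^a$ to $\overline{\Del}^a u=\omega^{-2}\,n^a$ must be absorbed into the new defining function $\overline{u}=\omega\,u$ rather than into $u$ itself.

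With both hypotheses in hand, I would apply Lemma~\ref{CovDerivI} to $(\M,\overline{g}_{ab})$: because $\Lie_{\overline{n}}\,\overline{q}_{ab}=0$, the connection $D$ defined by the pullback~\eqref{pullback} of $\overline{\Del}$ is well-defined on $\Sigma$, independent of the extension off $\Sigma$ of any $1$-form. Since $D$ is precisely the operator obtained by conformally rescaling and then pulling back, namely the conformal pullback method, this establishes the theorem.

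I expect the only real obstacle to be the second verification above: making sure the normal vector delivered by Lemma~\ref{confLie} is genuinely the $\overline{g}$-gradient of a defining function for $\Sigma$, so that the identity~\eqref{Lie-q} relating $\Lie_{\overline{n}}\,\overline{q}_{ab}$ to $\pback{2\,\overline{\Del}_a\overline{n}_b}$—which is what drives the proof of Lemma~\ref{CovDerivI}—continues to hold after rescaling. Everything else follows immediately from the two lemmas.
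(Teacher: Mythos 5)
Your proof is correct and follows essentially the same route as the paper's: invoke Lemma~\ref{confLie} to obtain $\omega$ with $\Lie_{\overline{n}}\,\overline{q}_{ab}=0$, then apply Lemma~\ref{CovDerivI} in the rescaled spacetime to define $D=\pback{\overline{\Del}}$. Your extra verification that $\overline{u}=\omega u$ serves as the defining function whose $\overline{g}$-gradient gives $\overline{n}_a$ on $\Sigma$ is a detail the paper leaves implicit, and it is handled correctly.
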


\begin{proof}
Since $\Lie_n q_{ab}=f\,q_{ab}$, Lemma~\ref{confLie} gives an $\omega$ such
that under the conformal transformation $\Lie_{\overline{n}}\overline{q_{ab}}=0$.
Now by Lemma~\ref{CovDerivI}, define the covariant derivative by
$D=\pback{\overline{\Del}}$.
\end{proof}

It is straightforward to verify that all of the examples considered so far
satisfy the conditions in Theorem~\ref{CovDerivII}.  The most interesting case
is the null cone~\eqref{cone}, for which
\begin{equation}
\Lie_{n}\,q_{ab} = \frac{2}{v}\,q_{ab}
\end{equation}
Substituting $f=2/v$ into~\eqref{conformal_ode} gives the ordinary
differential equation
\begin{equation}
\frac{\dot{\omega}}{\omega}
 = \frac{1}{\omega}\frac{\partial\omega}{\partial\,v}
 = -\frac{1}{v}.
\end{equation}
with solution
\begin{equation}
\omega = \frac{c}{v}
\end{equation}
where $c$ is a constant.  But
\begin{equation}
\frac{1}{r}\bigg|_{u=0} = \frac{\sqrt{2}}{v-u}\bigg|_{u=0} = \frac{\sqrt{2}}{v}
\end{equation}
and we see that rescaling the line element by $1/r^2$ leads to a well-defined
covariant derivative, as previously conjectured.

It is worth noting that if we regard the null cone as the \textit{unphysical}
space in Geroch's construction, then the corresponding ``physical''
stress-energy tensor only vanishes asymptotically to order 1, so that the
derivation of~\eqref{ODE} fails, although a similar result still holds, with a
different constant of proportionality.  We show in the next section that a
weaker condition on the stress-energy tensor is indeed sufficient for the
argument used here to work.

\section{Ricci Tensor}
\label{theorem-two}

The techniques adapted from Geroch's work on asymptotically flat spacetimes
have addressed the fundamental question: What are the conditions on a null
surface needed to construct a well-defined covariant derivative?  Either a
Killing normal vector, $\Lie_n q_{ab}=0$, or a conformal Killing vector,
$\Lie_n q_{ab}=f\,q_{ab}$, combined with a conformal transformation leads to a
well-defined covariant derivative on a null surface $\Sigma$ using the
pullback method.

One of the drawbacks of this construction is that the Lie derivative of the
metric must first be computed on $M$ and then pulled back to $\Sigma$ to test
the hypotheses of the theorems.  If the hypotheses are met, we return to $M$,
perform a conformal transformation if needed, compute $\Del$, then pull this
derivative back to $\Sigma$, giving $D$.  It would be nice if there was a test
to tell if the pullback led to a well-defined covariant derivative on $\Sigma$
and if a conformal transformation is required before pulling $\Del$ back to
$\Sigma$.  Again the work of Geroch leads to precisely such a condition

\begin{theorem}[Ricci Tensor and Covariant Derivative on $\Sigma$]
\label{connect}
Given a spacetime $\left(M,g_{ab}\right)$ containing a null surface
$\Sigma=\{u=0\}$, then if
$\pback{\Omega\left(R_{ab}-\widetilde{R}_{ab}\right)}=k\,q_{ab}$ , where
$\widetilde{g}_{ab}=\Omega^{-2}\,g_{ab}$, $\Omega = u$, and where $R_{ab}$ and
$\widetilde{R}_{ab}$ are the Ricci tensors of $g_{ab}$ and
$\widetilde{g}_{ab}$ respectively, then the conformal pullback method leads to
a well-defined connection on $\Sigma$.
\end{theorem}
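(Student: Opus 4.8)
The plan is to reduce the hypothesis to the condition $\Lie_n q_{ab}=f\,q_{ab}$ of Theorem~\ref{CovDerivII}, after which that theorem delivers the conclusion immediately. The bridge between the Ricci data and the degenerate metric is the conformal identity~\eqref{Ric} together with the observation~\eqref{LiegOmega} that the Hessian $\Del_a\Del_b\Omega$ equals $\tfrac12\,\Lie_n g_{ab}$. I would therefore aim to show that the assumption $\pback{\Omega(R_{ab}-\widetilde R_{ab})}=k\,q_{ab}$ is equivalent to saying that $\Lie_n q_{ab}$ is pointwise proportional to $q_{ab}$.

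First I would apply~\eqref{Ric} with $g_{ab}=\Omega^2\widetilde g_{ab}$ (matching the theorem's $\widetilde g_{ab}=\Omega^{-2}g_{ab}$), isolate the Hessian term, and multiply through by $\Omega$ to obtain
\begin{equation}
(s-2)\,\Del_a\Del_b\Omega
  = \Omega\,(\widetilde R_{ab}-R_{ab})
    - g_{ab}\,\Box\Omega
    + (s-1)\,\Omega^{-1}\,g_{ab}\,(\Del^m\Omega)(\Del_m\Omega).
\end{equation}
Substituting $\Del_a\Del_b\Omega=\tfrac12\,\Lie_n g_{ab}$ and pulling back to $\Sigma$, equation~\eqref{Lie-q} turns the left-hand side into $\tfrac{s-2}{2}\,\Lie_n q_{ab}$, while $\pback{g_{ab}}=q_{ab}$ renders every term on the right a scalar multiple of $q_{ab}$. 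Invoking the hypothesis $\pback{\Omega(R_{ab}-\widetilde R_{ab})}=k\,q_{ab}$ then gives $\Lie_n q_{ab}=f\,q_{ab}$ with
\begin{equation}
f=\frac{2}{s-2}\Bigl[\,-k-(\Box\Omega)\big|_{\Sigma}
  +(s-1)\,\pback{\Omega^{-1}(\Del^m\Omega)(\Del_m\Omega)}\,\Bigr],
\end{equation}
which is precisely the input Theorem~\ref{CovDerivII} requires; note that $s=\dim M\ge 3$, so $s-2\neq 0$.

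The step needing genuine care --- and the main obstacle --- is the term $\Omega^{-1}(\Del^m\Omega)(\Del_m\Omega)$, which is formally singular on $\Sigma$ owing to the factor $\Omega^{-1}$. This is exactly where the null hypothesis enters: with $\Omega=u$ and $\Sigma=\{u=0\}$ null, $(\Del^m\Omega)(\Del_m\Omega)=g^{uu}$ vanishes on $\Sigma$, so numerator and denominator vanish together and a Taylor expansion in $u$ shows the quotient has a finite smooth limit $\partial_u g^{uu}\big|_{u=0}$ along $\Sigma$; hence its pullback is well-defined. (Finiteness is in fact already forced by the theorem's hypothesis, since the remaining two contributions to $\Omega(R_{ab}-\widetilde R_{ab})$ are manifestly smooth.) I would state this observation explicitly --- without the null condition the reduction collapses --- and then simply cite Theorem~\ref{CovDerivII} to conclude.
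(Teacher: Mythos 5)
Your proposal is correct and follows the same overall route as the paper: multiply the conformal transformation law~\eqref{Ric} by $\Omega$, identify the Hessian $\Del_a\Del_b\Omega$ with $\tfrac12\Lie_n g_{ab}$ via~\eqref{LiegOmega}, pull back to $\Sigma$, and hand the resulting relation $\Lie_n q_{ab}=f\,q_{ab}$ to Theorem~\ref{CovDerivII}. The one place you genuinely diverge is the treatment of the formally singular term $\Omega^{-1}(\Del^m\Omega)(\Del_m\Omega)$. The paper establishes its finiteness by taking the $g$-trace of~\eqref{Ric} to obtain~\eqref{trRic}, which re-expresses that quantity through $\Box\Omega$ and $\Omega R-\Omega^{-1}\widetilde R$, and then substitutes back to get an explicit formula for $f$. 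You instead argue directly from the geometry: since $\Sigma=\{u=0\}$ is null, $(\Del^m\Omega)(\Del_m\Omega)=g^{uu}$ vanishes on $\Sigma$, so $u^{-1}g^{uu}$ extends smoothly with boundary value $\partial_u g^{uu}\big|_{u=0}$. That argument is sound (Hadamard's lemma applied to a smooth function vanishing on $\{u=0\}$) and is arguably cleaner, since the paper's assertion that the last term of~\eqref{trRic} ``admits a smooth limit by assumption'' requires an extra step to extract control of $\Omega^{-1}\widetilde R$ from a hypothesis stated only for the pullback of $\Omega\bigl(R_{ab}-\widetilde R_{ab}\bigr)$; your version uses only the nullness of $\Sigma$, which is already in the hypotheses. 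The two resulting expressions for $f$ agree modulo the trace identity, and either suffices to invoke Theorem~\ref{CovDerivII}.
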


\begin{proof}
From~\eqref{Ric} and~\eqref{LiegOmega},
\begin{equation}
\Omega\,\widetilde R_{ab}
 = \Omega\,R_{ab} + \frac{s-2}{2} \Lie_n g_{ab} + g_{ab} \Del^m \Del_m \Omega
	- \frac{s-1}{ \Omega} g_{ab} (\Del^{m} \Omega)(\Del_{m} \Omega).
\label{OmRic}
\end{equation}
Taking the trace of~\eqref{Ric} (using the metric $g$) yields
\begin{equation}
\frac{1}{\Omega} \> (\Del^{m} \Omega)(\Del_{m} \Omega)
  = \frac{2}{s} \> \Del^m \Del_m \Omega
	+ \frac{1}{s(s-1)} \>(\Omega R - \Omega^{-1}\widetilde{R})
\label{trRic}
\end{equation}
thus showing that the LHS admits a smooth limit to $\Sigma$ (since the last
term on the RHS does by assumption).  Setting $\Omega=u$ in~\eqref{OmRic} and
using~\eqref{trRic} and our hypotheses yields
\begin{equation}
\Lie_n q_{ab}
  = \pback{\Lie_n g_{ab}} \\
  = \left(
	-\frac{2k}{s-2} + \frac{2}{s} \Del^m \Del_m \Omega
	+ \frac{2}{s(s-2)} \Omega^{-1}\widetilde{R}
	\right) \Bigg|_{u=0}\,q_{ab}
\end{equation}
Theorem~\ref{CovDerivII} now implies that there is a covariant derivative $D$
on $\Sigma$.
\end{proof}

While the above derivation can be found in Geroch~\cite{rG76}, the
interpretation is quite different.  We start with a metric $g_{ab}$, a null
surface $\Sigma=\{u=0\}$ and a conformal factor defined by $\Omega=u$.  In the
sense of Geroch, we are creating an artificial ``physical'' space
$(\widetilde{M},\,\widetilde{g}_{ab})$ in order to determine if the pullback
method will result in a well-defined covariant derivative.  In the Geroch
approach, the boundary at null infinity was separated from $\widetilde{M}$ by
the conformal transformation in order to define some structure of the null
surface.  In our approach, one begins with the null surface and uses the
``physical'' space to define the covariant derivative.

\section{Further Examples}
\label{Examples}

The Schwarzschild metric in double-null Kruskal-Szekeres coordinates is given
by
\begin{equation}
ds^2 = -\frac{32\,m^3}{r}\,e^{-r/2m}\,du\,dv
	+ r^2\,d\theta^2 + r^2\,\sin^2\theta\,d\phi^2
\end{equation}
where $r$ is given implicitly by
\begin{equation}
u\,v = \left(1-\frac{r}{2m}\right)\,e^{r/2m}.
\end{equation}
The Schwarzschild metric is a vacuum solution of Einstein's equation, so the
Ricci tensor vanishes.  Considering the horizon at $u=0$ and using~\eqref{Ric}
with $\Omega=u$, the relevant components of the conformally related Ricci
tensor are given by
\begin{subequations}
\begin{eqnarray}
\widetilde{R}_{vv} &=& 0 \\
\widetilde{R}_{\theta \theta} &=& \dfrac{r}{m} \\
\widetilde{R}_{\phi \phi} &=& \dfrac{r}{m}\,\sin^2 \theta
\end{eqnarray}
\end{subequations}
and we have
\begin{equation}
\pback{\Omega\,(R_{ab}-\widetilde{R}_{ab})}
 = \pback{u\,(R_{ab}-\widetilde{R}_{ab})}
 = 0,
\end{equation}
trivially satisfying the necessary conditions of Theorem~\ref{connect}.
Straightforward computation verifies that the pullback connection constructed
from the conformally related metric
\begin{equation}
\widetilde{ds}^2
  = -\frac{32\,m^3}{r^3}\,e^{-r/2m}\,du\,dv + d\theta^2 + \sin^2\theta\,d\phi^2.
\end{equation}
is well-defined, and can therefore be used on the horizon.

More generally, a spherically symmetric space times has a line element of the
form
\begin{equation}
ds^2 = h\,du\,dv+r^2\,d\theta^2+r^2\,\sin^2\theta\,d\phi^2
\end{equation}
where $h$ and $r$ are both functions of the null coordinates $u$
and $v$.
As in previous examples, the null surface is chosen to be
$\Sigma=\{u=0\}$.  The nonzero components of the Ricci tensor are
\begin{subequations}
\begin{align}
R_{uu} &= -\dfrac{2\,\left(
	h\,\frac{\partial^2 r}{\partial u^2}
	- \frac{\partial h}{\partial u}\,\frac{\partial r}{\partial u}
	\right)}{h\,r} \\
R_{uv} &= \dfrac{r\,\frac{\partial h}{\partial u}\frac{\partial h}{\partial v}
	- h\,r\,\frac{\partial^2 h}{\partial v\, \partial u}
	- 2\,h^2\,\frac{\partial^2 r}{\partial v\, \partial u}}{h^2\,r} \\
R_{vv} &= -\dfrac{2\,\left(
	h\,\frac{\partial^2 r}{\partial v^2}
	- \frac{\partial h}{\partial v}\frac{\partial r}{\partial v}
	\right)} {h\,r} \\
R_{\theta \theta} &= -\dfrac{4\,\,r\,\frac{\partial^2 r}{\partial v\, \partial u}
	- h + 4\,\frac{\partial r}{\partial v} \frac{\partial r}{\partial u}}
	{h} \\
R_{\phi \phi} &= -\dfrac{\left(
	4\,r\,\frac{\partial^2 r}{\partial v\, \partial u}
	- h + 4\,\frac{\partial r}{\partial v}\frac{\partial r}{\partial  u}
	\right)\sin^2\theta}{h}
\end{align}
\end{subequations}
Setting $\Omega=u$, and again using~\eqref{Ric}, the conformally related Ricci
tensor has nonzero components
\begin{subequations}
\begin{align}
\widetilde{R}_{uu}
  &= \dfrac{2\,\left(
	-u\,h\,\frac{\partial^2 r}{\partial u^2}
	- r\,\frac{\partial h}{\partial u}
	+ u\,\frac{\partial h}{\partial u}\,\frac{\partial r}{\partial u}
	\right)}{u\,h\,r} \\
\widetilde{R}_{uv}
  &= \dfrac{u\,r\,\frac{\partial h}{\partial u}\frac{\partial h}{\partial v}
	- u\,h\,r\,\frac{\partial^2 h}{\partial v\, \partial u}
	+ 2\,h^2\,\frac{\partial r}{\partial v}
	- 2\,u\,\,h^2\,\frac{\partial^2 r}{\partial v\, \partial u}}
	{u\,h^2\,r} \\
\widetilde{R}_{vv}
  &= -\dfrac{2\,\left(h\,\frac{\partial^2 r}{\partial v^2}
	- \frac{\partial h}{\partial v}\frac{\partial r}{\partial v}\right)}
	{h\,r} \\
\widetilde{R}_{\theta \theta}
  &= -\dfrac{4\,u\,r\,\frac{\partial^2 r}{\partial v\, \partial u}- u\,h
	+ 4\,u\,\frac{\partial r}{\partial v}\frac{\partial r}{\partial u}
	- 8\,r\,\frac{\partial r}{\partial v}}{u\,h} \\
\widetilde{R}_{\phi \phi}
  &= -\dfrac{\left(4\,u\,r\,\frac{\partial^2 r}{\partial v\, \partial u} - u\,h 
	+ 4\,u\,\frac{\partial r}{\partial v}\frac{\partial r}{\partial u}
	- 8\,r\,\frac{\partial r}{\partial v}\right)\sin^2\theta}{u\,h}
\end{align}
\end{subequations}
Computing the pullback of
$\Omega(R_{ab}-\widetilde{R}_{ab})=u\,(R_{ab}-\widetilde{R}_{ab})$,
the relevant components are
\begin{subequations}
\begin{align}
u(R_{vv}-\widetilde{R}_{vv}) &\longrightarrow 0  \\
u(R_{\theta \theta}-\widetilde{R}_{\theta \theta})
  &\longrightarrow \dfrac{-8\,r\,\frac{\partial r}{\partial v}}{h}
   = \left(\dfrac{-8\,\frac{\partial r}{\partial v}}{r\,h}\right)\,r^2 \\
u(R_{\phi \phi}-\widetilde{R}_{\phi \phi})
  &\longrightarrow \dfrac{-8\,r\,\frac{\partial r}{\partial v}}{h}
	\,\sin^2 \theta
   = \left(\dfrac{-8\,\frac{\partial r}{\partial v}}{r\,h}\right)
	\,r^2\,\sin^2 \theta
\end{align}
\end{subequations}
so that
\begin{equation}
\pback{\Omega(R_{ab}-\widetilde{R}_{ab})}
  = \left(\dfrac{-8\,\frac{\partial r}{\partial v}}{r(0,v)\,h(0,v)}\right)
	\,q_{ab}
\end{equation}
Thus the conditions of Theorem~\ref{connect} are satisfied.  More importantly,
the conformal transformation $\overline{g}_{ab}=(\omega^2)\,g_{ab}$ with
$\omega=r(u,v)$ will yield $\Lie_{\overline{n}}\overline{q}_{ab} = 0$, the
condition needed to use the pullback to produce a well-defined covariant
derivative on $\Sigma=\{u=0\}$.

Again, straightforward computation verifies that the pullback connection
constructed from the conformally related metric is indeed well-defined on
$\Sigma$.

\section{Summary}
\label{Summary}

Due to the degenerate metric, working with null surfaces offers some very
challenging obstacles, since traditional tools such as Christoffel symbols are
not defined.  Gauss decomposition fails, since there are non-zero null vectors
both tangent and perpendicular to the hypersurface.  The work of Duggal and
Benjacu attempts to overcome this difficulty by defining a screen manifold and
a lightlike transversal vector bundle to decompose the manifold and the null
hypersurface.  However, even with all of this structure, there are elementary
examples, such as the null cone, that do not satisfy the hypotheses necessary
to produce a covariant derivative independent of the screen.

An alternate construction uses the pullback to define the covariant
derivative.  Care must be taken when using this technique, since the resulting
derivative operator may not be well defined; this technique will only work as
long as the null vector field is a Killing vector field.

Motivated by the work of Geroch on asymptotically flat spacetimes, conformal
transformations were used not only to give a well-defined derivative on null
hypersurfaces, but also to provide a test to determine whether the null
surface admits such a definition. 

Finally, the conformal pullback method was shown to work at the horizon of
the Schwarzschild geometry, and more generally for any spherically symmetric
spacetime.

Further work is needed to understand the implications of this construction.
For example, in the case of the null cone, the conformal transformation
results in a null cylinder.  What does it mean to use the cone's covariant
derivative operator on a sphere?  And are there non-symmetric null surfaces on
which this construction works?

Finally, we remark that there does not appear to be a similar technique for
Riemannian spaces, but since traditional Gauss decomposition works, this
technique is not needed in that case.

\section*{Acknowledgments}
This paper is based on work submitted by DH in partial fulfillment of the
degree requirements for his Ph.D.\ in Mathematics at Oregon State
University~\cite{HickThesis}.

\end{document}